\definecolor{dargray}{rgb}{0.18, 0.18, 0.18}
\definecolor{darkgreen}{rgb}{0.01,0.6,0.1}
\definecolor{lightrose}{rgb}{0.996,0.75,0.793}
\definecolor{rose}{cmyk}{0.75, 0.75, 0,0}
\definecolor{winered}{rgb}{0.6,0.1,0.1}
\definecolor{lightyellow}{rgb}{1, 1, 0.6}
\definecolor{transparent}{rgb}{1,1,1}
\definecolor{lightlightgray}{rgb}{0.88, 0.88, 0.88}
\definecolor{lightgray}{rgb}{0.8, 0.8, 0.8}
\definecolor{lightblue}{rgb}{0.527,0.805,0.977}
\definecolor{lightgreen}{rgb}{.74,1,0}
\newtheorem{theorem}{Theorem}
\newtheorem{corollary}{Corollary}
\newtheorem{example}{Example}
\newtheorem{lemma}{Lemma}
\newtheorem{proposition}{Proposition}
\newtheorem{claim}{Claim}
\newcommand{\ppp}{{\mathcal{P}}}
\newcommand{\MirkMin}{\probname{Mirkin Distance Minimization}}
\newcommand{\mirkin}{\textsf{mirk}}
\newcommand{\Mirkin}{\textsf{Mirk}}
\newcommand{\sol}{\ensuremath{s^*}}
\newcommand{\hamming}{\textsf{hd}}
\newcommand{\mpdist}{\ensuremath{\delta}}
\newcommand{\Hd}[1]{\textsf{HD}}
\newcommand{\ins}{\textsf{ins}}
\newcommand{\mirkValue}{L\cdot n\cdot \gvalue+\truthbound}
\newcommand{\bvalueS}{(4n-4)}
\newcommand{\gvalue}{\left[\binom{2n-2}{2}+(4n-4)\right] \cdot (2n-2)}
\newcommand{\truthbound}{m\cdot (3n^2-11)}
\newcommand{\LL}{3m\cdot n^2}
\newcommand{\prob}[6]{%
 \needspace{3\baselineskip}
  \begin{center}%
    \begin{minipage}{0.95\linewidth}%
      \textsc{#1}\\
      \textbf{#2} #3\\
      \textbf{#4} #5
    \end{minipage}%
  \end{center}%
}
\newcommand{\myemph}[1]{{\color{green!40!black}\emph{#1}}}
\newcounter{mycounter}
\newcommand{\probname}[1]{\textsc{#1}}
\newcommand{\probdef}[3]{\prob{\probname{#1}}{Input:}{#2}{Question:}{#3}{as}}
\newcommand{\ETH}{Exponential Time Hypothesis}
\newcommand{\pNAESAT}{\textsc{NAE-3SAT}}
\newcommand{\pHDClong}{\textsc{$p$-Norm Hamming Centroid}}
\crefname{algorithm}{Algorithm}{Algorithms}
\crefname{lemma}{Lemma}{Lemmas}
\crefname{proposition}{Proposition}{Propositions}
\crefname{theorem}{Theorem}{Theorem}
\crefname{claim}{Claim}{Claims}
\title{A Note on Clustering Aggregation for Binary Clusterings}
\author{Jiehua Chen$^1$ and Danny Hermelin$^2$ and Manuel Sorge$^1$\\[2ex]
{\small $^1$TU Wien, Institute for Logic and Computation, TU Wien, Austria}\\
{\small $^2$Ben-Gurion University of the Negev, Beer Sheva, Israel}\\
{\footnotesize \texttt{jiehua.chen@ac.tuwien.ac.at} \quad \texttt{hermelin@bgu.ac.il}\quad \texttt{manuel.sorge@ac.tuwien.ac.at}}}
\begin{document}

\maketitle

\begin{abstract}
  We consider the clustering aggregation problem in which we are given
  a set of clusterings and want to find an aggregated clustering which
  minimizes the sum of mismatches to the input clusterings. In
  the binary case (each clustering is a bipartition) this problem was
  known to be NP-hard under Turing reductions. %
  We strengthen this result by providing a polynomial-time many-one
  reduction. Our result also implies that no
  $2^{o(n)}\cdot |I'|^{O(1)}$-time algorithm exists that solves any
  given clustering instance~$I'$ with $n$ elements, unless the \ETH{} fails.
  On the positive side, we show that the problem is fixed-parameter tractable with
  respect to the number of input clusterings and we give an integer linear programming formulation.
\end{abstract}

\section{Introduction}
Clustering is a fundamental data analysis task; in a basic form we aim to partition a set of given entities into groups of pairwise similar entities.
Nowadays, each entity is often specified by multiple attributes.
For example, in a social network a user may have a profile that consists of a description, a GPS or other location trace, and lists of friends, of liked posts, of visited websites, and so on. 
It is desirable to take aspects of these different attributes into account when computing a clustering.
This desire sparked the fields of so-called multi-view~\cite{yang_multiview_2018,p_multiview_2019,fu_overview_2020}, multi-layer~\cite{KL15,yuvaraj2021topological}, or ensemble-based clusterings~\cite{StrGho2002,FilSki2004,TAG17,boongoen_cluster_2018}.
A common strategy to obtain the overall clustering is then, to compute clusterings based on individual attributes and, afterwards, to \emph{aggregate} the clusterings into a single consensus clustering~\cite{StrGho2002,FilSki2004,GioManTsa2007,boongoen_cluster_2018}.
For instance, we may first individually cluster the social-network users according to their friendship relation graph, then cluster them based on meetings they had in the location trace, then on the similarity of visited websites, and finally aggregate all these clusterings into one.
The problem that we are interested in this article is the following: Given a set of clusterings, how to compute a consensus clustering?

A common interpretation of a \emph{consensus} clustering~$\mathcal{C}$ is one that is closest to the input clusterings by minimizing the sum over all input clusterings $\mathcal{D}$ of some distance measure between $\mathcal{C}$ and $\mathcal{D}$~(see the survey \cite{boongoen_cluster_2018}).
A fundamental way to measure the distance between two clusterings $\mathcal{C}$ and $\mathcal{D}$ is to count the number of pairs of entities that are clustered differently by $\mathcal{C}$ and $\mathcal{D}$~\cite{GioManTsa2007}.
That is, to count the number of pairs that are together in one cluster in $\mathcal{C}$ and in two different clusters in $\mathcal{D}$ and vice~versa.
This distance measure is known as the \emph{Mirkin distance}~\cite{Mirkin1996}.

We in particular focus on the case of binary clusterings, that is, clusterings into two clusters.
Thus, we arrive at the problem of, given a set $S$ of binary clusterings, to compute a binary clustering that minimizes the sum of Mirkin distances to the clusterings in $S$.
We call (the decision variant of) this problem \MirkMin.
See \cref{sec:prelim} for the formal definitions.

\paragraph{Our contributions}
Our main result in this paper is a tight running-time bound for the \MirkMin{} problem.
Specifically, there is a straightforward brute-force algorithm that solves the problem in $O(2^n \cdot nm)$ time, where $n$ denotes the length of the input strings and $m$ their number: The algorithm simply tries all $2^n$ solution strings and checks their Mirkin distance to the input.
  At first glance it seems a $O(n^2)$-time factor is needed to compute the Mirkin distance of a guessed solution string and the input strings.
  Note that for binary input strings, the Mirkin distance can be computed via the Hamming distance (see \cref{sec:prelim}), which can be computed in $O(n)$ time. %
Unfortunately, the running time of this brute-force algorithm cannot be substantially improved:
We show, using an intricate reduction, that the problem cannot be solved in $2^{o(n)} \cdot (nm)^{O(1)}$ time unless the Exponential Time Hypothesis (ETH)~\cite{ImPaZa2001} fails.
This settles the form of the asymptotic complexity with respect to the parameter~$n$.
In the second part of the paper, we consider the parameter $m$ of input strings, we show that the problem is fixed-parameter tractable with respect to~$m$, and we give an integer linear programming (ILP) formulation.

\paragraph{Related work}
\citet{DoGuKoWe2014} showed that \MirkMin{} is NP-hard under Turing reductions (see their Theorem~3).
That is, they gave a construction that takes an instance $I$ of the so-called \textsc{Cluster Editing} problem and produces a polynomial number of instances of \MirkMin{} such that one of these instances is positive if and only if $I$ is.
Instead, we show that we can map each instance of \textsc{NAE-3SAT} to a single equivalent one of \MirkMin. (\textsc{NAE-3SAT} is the variant of \textsc{3SAT} in which we want to a find an assignment of truth values to variables such that each clause contains a satisfied and an unsatisfied literal.)
Moreover, our prudent use of gadgets also allows us to give a tight lower bound for the running time assuming the ETH.

\MirkMin{} is a variant of the NP-hard \textsc{Clustering Aggregation}~\cite{GioManTsa2007} problem (aka.\ \textsc{Consensus Clustering}~\cite{FilSki2004} or \textsc{Clusters Ensembles}~\cite{StrGho2002}) from machine learning and bioinformatics.
Therein, we have as input a multiset~$\ppp$ of $m$~partitions on an $n$-element set~$U$ and the goal
is to search for a target partition~$P^*$ that minimizes the sum of Mirkin distances to all $m$ partitions.

Let us now view clusterings from a relational point of view.
Recall that a \myemph{partition} on the set~$U$ is an equivalence relation~$\sim$ (i.e., reflexive, symmetric, and transitive) over~$U\times U$.
Each partition can be represented by the equivalence classes of the corresponding equivalence relation.
We can thus alternatively define the Mirkin distance between two partitions as the number of pairs of elements which are equivalent in one partition but non-equivalent in the other, or the other way round.

\MirkMin{} has applications in voting theory and is also a restriction of the \textsc{Binary Relation Aggregation via Median Procedure} problem~\cite{BarMon1981,Wakabayashi1986,Wakabayashi1998,Hudry2012}.
The latter problem takes as input a set of alternatives~$C$ and a set of votes expressed as binary relations over~$C\times C$, and aims at finding a binary relation with minimum sum of symmetric-difference distances to the votes~\cite{BarMon1981}.
The \myemph{symmetric-difference distance} between two binary relations~$s$ and $t$ is simply the cardinality of their symmetric differences~$\mpdist(s,t)=|s\setminus t| + |t\setminus s|$.
It is straightforward to see that for equivalence relations the
symmetric-difference distance equals two times the Mirkin distance.

Relatedly, we considered a problem called \pHDClong, which, for some fixed $p>1$, is to find a centroid string which minimizes the $p$\nobreakdash-norm of its Hamming distances to the input strings~\cite{CheHerSor2018pnormesa}.
When the objective is to maximize instead of minimize the distances and when $p=2$, the \MirkMin{} problem can be reduced to this maximization variant.

\paragraph{Remarks}
In 2017, Jiehua Chen participated in the working group on ``Aggregation Procedures with Nonstandard Input and Output Types'' of the Dagstuhl seminar on ``Voting: Beyond Simple Majorities and Single-Winner Elections'' that our friend Gerhard Woeginger also attended~\cite{Dagstuhl2017}. One of the open questions to be addressed in the working group is concerned about the computational complexity of \MirkMin{} proposed by Bill Zwicker (Union College).
Gerhard immediately pointed out that the objective function underlying \MirkMin{} is not convex. Although the working group did not settle the computational complexity of \MirkMin{}, we would like to dedicate this article to Gerhard due to his insightful comments and discussions during the seminar. 

\section{Preliminaries} \label{sec:prelim}%
For an integer~$t$, let \myemph{$[t]$} denote the subset~$\{1,\ldots,t\}$.
For ease of presentation, we will throughout view the \MirkMin\ problem as a problem on binary strings, i.e., tuples in $\{0, 1\}^{*}$.
We abbreviate binary strings as \myemph{strings} if it is clear from the context.
Let $s$ and $s'$ be two strings.
Then, we use \myemph{$s\circ s'$} to denote the concatenation of $s$ and $s'$ 
and \myemph{$|s|$} and \myemph{$\overline{s}$} to denote the length and the complement of~$s$.
By \myemph{$s[i]$} we mean the value of string~$s$ at the $i^{\text{th}}$ coordinate.
and we write \myemph{$s[i,j]$} as shorthand of $s[i]s[j]$.
Given two integers~$i,j\in \{1,2,\ldots, |s|\}$ with $i \le j$, 
we use the notation \myemph{$s|^{j}_{i}$} to denote the substring~$s[i]s[i+1]\cdots s[j]$.
Further, let \myemph{$\hamming(s,s')$} denote the Hamming distance between strings~$s$ and $s'$,
i.e., the number of coordinates at which the values of $s$ and $s'$ differ.
For instance, $\hamming(0101, 1100) = 2$.

Given two binary strings~$s$ and $s'$, and an integer~$i$ with $1\le i \le |s|+1$, 
by $\ins(s, s', i)$ we mean the string obtained by inserting the string~$s'$ into $s$ just before the $i^{\text{th}}$~position.
For instance, $\ins(0110,00,3)=010010$.
In particular, $\ins(s,s',1)=s'\circ s$ and $\ins(s,s',|s|+1)=s\circ s'$.

The Mirkin distance~$\mirkin(s, s')$~\cite{Mirkin1996} between two equal-length strings~$s$ and~$s'$ counts the number of mismatches for each pair of coordinates. Formally, 
\myemph{${\Mirkin(s,s')}$} $=
\{\{i,j\}\colon (s[i]=s[j] \wedge s'[i]\neq s'[j])
\vee (s[i] \neq s[j] \wedge s'[i] = s'[j])\}$, and
$\myemph{\mirkin(s, s')}=|\Mirkin(s,s')|$.
For instance, 
$\Mirkin(01\,001, 01\,110) = \{\{i,j\} \mid i\in \{1,2\}\wedge j\in \{3,4,5\}\}$, so $\mirkin(01\,001, 01\,110) = 6$.
For binary strings, the Mirkin distance has an alternative definition that uses Hamming distances:
\[\mirkin(s,s')=\hamming(s,s')\cdot \hamming(\overline{s},s') = \hamming(s,s')\cdot (n-\hamming{}(s,s')).\]
In this paper, we will use the alternative definition extensively.
Note that, by the above formulation, the Mirkin distance function is not convex.

Let $S$ be a collection of strings of length~$n$ each and let $\sol$ denote an arbitrary length-$n$ string. 
The Mirkin distance between~$\sol$ and~$S$ is the sum of the Mirkin distances between $\sol$ and each string in the sequence: \myemph{$\mirkin(\sol,S)$} $=\sum_{s'\in S}\mirkin(\sol,s')$. %
The Mirkin distance between $\sol$ and $S$ regarding a pair~$\{i,j\}$, $i\neq j\in [n]$,
is the Mirkin distance between $\sol[i,j]$ and the multiset~$\{s'[i,j] \mid s'\in S\}$.

The formal statement of the problem is as follows:
\probdef{\MirkMin}
{A collection~$S$ of strings~$s_1, \ldots, s_m \in \{0, 1\}^n$ and an integer~$k$.}
{Is there a string $\sol \in \{0, 1\}^n$ such that $\mirkin(\sol, S) \le k$?}

\section{NP-hardness for \MirkMin}

In this section, we show that \MirkMin{} is indeed NP-hard by utilizing a gadget that \citet{DoGuKoWe2014} used to enforce that the Mirkin distance for each two coordinates, when restricted to only these two coordinates, is exactly half the number of the input strings since exactly half of the input strings have the same value ($00$ or $11$) and the other half have different values~($01$ or $10$). \cref{alg:build-gadget} computes such a kind of gadget. %
Note that, however, this type of gadget alone is not enough to devise a many-one hardness reduction.  %
This gadget can be used to encode truth-values of variables in a reduction from \textsc{NAE-3SAT} but an essential difficulty that remains is to find gadgets that encode clause satisfaction. 

\begin{algorithm}
  \DontPrintSemicolon
  \caption{Algorithm for constructing $2^{\ell}$  binary strings of length~$2^{\ell}$ each such that for each two coordinates,
    half of the strings have the same value and the other half not.}
  \label{alg:build-gadget}
  \small
  \SetKwBlock{Block}
  \SetAlCapFnt{\footnotesize}
  \SetKwFunction{RecursiveBuild}{Build}
  \RecursiveBuild{$2^{\ell}$}:
  \Block{
    \lIf{$\ell=1$}
    {
      \Return{$(00,01)$}
    }
    \Else{
      $(s_1,s_2,\ldots, s_{2^{\ell-1}})\leftarrow \RecursiveBuild(2^{\ell-1})$

      \Return{$(s_1\circ s_1, s_1\circ \overline{s}_1, \, \ldots, \, s_{2^{2\ell-1}}\circ s_{2^{2\ell-1}}, s_{2^{2\ell-1}} \circ \overline{s}_{2^{2\ell-1}} )$}
    }
  }
\end{algorithm}

We show that the strings constructed by \cref{alg:build-gadget} fulfills our requirement above.

\begin{proposition}\label{prop:alg-half}
  Let $S$ be the sequence of strings constructed by \cref{alg:build-gadget}.
  Then, for each two distinct coordinates~$i,j\in \{1,2,\ldots,2^{\ell}\}$,
  \begin{enumerate}[(1)]
    \item there are $|S|/2$ strings from $S$, called~$a_{1}$, $a_2$, $\ldots, a_{|S|/2}$, such that $a_r[i]=a_r[j]$, $r\in [|S|/2]$, and
    \item there are $|S|/2$ strings from $S$, called $b_1$, $b_2$, $\ldots, b_{|S|/2}$, such that $b_r[i]\neq b_r[j]$, $r\in [|S|/2]$.
\end{enumerate} 
\end{proposition}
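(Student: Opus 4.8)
The natural approach is induction on $\ell$, following the recursive structure of \cref{alg:build-gadget}. The base case $\ell = 1$ is immediate: $S = (00, 01)$ has $|S| = 2$, the only pair of bits is $\{1,2\}$, the string $00$ agrees on both bits, and $01$ disagrees, so each statement holds with exactly $|S|/2 = 1$ witness.

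For the inductive step, suppose the claim holds for $\ell - 1$, and let $(s_1, \dots, s_{2^{\ell-1}})$ be the sequence returned by $\RecursiveBuild(2^{\ell-1})$, so that $S = (s_i \circ s_i, s_i \circ \overline{s_i})_{1 \le i \le 2^{\ell-1}}$ has length $2^\ell$. Fix two distinct bits $i, j \in \{1, \dots, 2^\ell\}$. I would split into three cases according to where $i$ and $j$ fall relative to the midpoint $2^{\ell-1}$. \emph{Case 1: both $i, j \le 2^{\ell-1}$.} Then for any string $t = s_p \circ s_p$ or $t = s_p \circ \overline{s_p}$ in $S$, the restriction $t[i,j]$ equals $s_p[i,j]$; so the agreement/disagreement pattern of $S$ on bits $i,j$ is exactly the pattern of the sequence $(s_1, s_1, s_2, s_2, \dots)$ (each $s_p$ counted twice) on those bits, which by the inductive hypothesis splits $2^{\ell-1}$-many $s_p$'s evenly, hence splits the $2^\ell$ strings of $S$ evenly. \emph{Case 2: both $i, j > 2^{\ell-1}$.} Write $i = i' + 2^{\ell-1}$, $j = j' + 2^{\ell-1}$ with $i', j' \le 2^{\ell-1}$ distinct. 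For $t = s_p \circ s_p$ we have $t[i,j] = s_p[i',j']$, while for $t = s_p \circ \overline{s_p}$ we have $t[i,j] = \overline{s_p}[i',j']$, which agrees iff $s_p[i',j']$ agrees (complementing both bits preserves agreement). So again each $s_p$ contributes one string of each type to $S$ with the same agreement behavior as in Case 1, and the even split follows from the inductive hypothesis on bits $i', j'$.

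\emph{Case 3: $i \le 2^{\ell-1} < j$.} Write $j = j' + 2^{\ell-1}$ with $j' \le 2^{\ell-1}$. For $t = s_p \circ s_p$, $t[i,j] = s_p[i]\,s_p[j']$; for $t = s_p \circ \overline{s_p}$, $t[i,j] = s_p[i]\,\overline{s_p[j']}$. Exactly one of these two has $t[i]=t[j]$ and the other has $t[i]\neq t[j]$ — regardless of the values $s_p[i], s_p[j']$ and regardless of whether $i = j'$ or not — since flipping only the second bit toggles agreement. Hence from the $2^{\ell-1}$ index pairs $(s_p \circ s_p,\, s_p \circ \overline{s_p})$ we get exactly $2^{\ell-1} = |S|/2$ agreeing strings and $2^{\ell-1}$ disagreeing strings, so both statements hold; here we do not even need the inductive hypothesis. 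Combining the three cases completes the induction.

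The main obstacle — though it is a mild one — is Case 3, and more generally being careful about the degenerate sub-possibility $i = j'$ (the two chosen bits are "the same bit in the two halves"); one must check that the argument does not secretly assume the two bits are distinct within a single copy of $s_p$. It is worth stating explicitly as a small observation that for any bit string $u$ and the string $u \circ u$ versus $u \circ \overline{u}$, and any $a \le |u| < b$, exactly one of the two restrictions to $\{a, b\}$ is a "same" pattern. A secondary bookkeeping point is that in Cases 1–2 the inductive hypothesis is applied to a sequence of length $2^{\ell-1}$ but each of its strings is duplicated in $S$, so the counts scale by the correct factor of $2$; this should be spelled out to avoid an off-by-a-factor slip.
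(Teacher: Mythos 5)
Your proposal is correct and follows essentially the same route as the paper: induction on $\ell$, with the same-half cases handled by the inductive hypothesis and the mixed case handled by observing that exactly one of $s_p \circ s_p$ and $s_p \circ \overline{s_p}$ agrees on the two positions. You are merely more explicit about the duplication factor in the same-half cases and the harmless sub-case $i = j'$, both of which the paper glosses over.
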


\begin{proof}
  We show the statement via induction on $\ell$.
  For $\ell=1$, \cref{alg:build-gadget} returns $(00,01)$. Our two statements follow immediately.
  Assume that sequence~$S'=\RecursiveBuild(2^{\ell-1})$ satisfies the proposition and let $S'=(s_1,\ldots, s_{2^{\ell}-1})$.
  We show that $S=\RecursiveBuild(2^{\ell})$ also satisfies the proposition.
  By \cref{alg:build-gadget}, we have $S=(s_r \circ s_r, s_r \circ \overline{s}_r)_{s_r \in S'}$.
  
  Consider two distinct coordinates~$i,j\in \{1,2,\ldots, 2^\ell\}$.
  Obviously, by our induction assumption, the two statements hold if $1\le i,j \le 2^{\ell-1}$ or $2^{\ell-1}+1\le i, j \le 2^\ell$.
  Thus, we assume that $1\le i \le 2^{\ell-1}$ and $2^{\ell-1}+1\le j \le 2^\ell$ (the other case when $1\le j \le 2^{\ell-1}$ and $2^{\ell-1}+1\le i \le 2^\ell$ is analogous).
  By construction, $S$ consists of the following strings $s_r\circ s_r$ and $s_r\circ \overline{s}_r$, $r \in [2^{\ell-1}]$.
  By our choice of $i$ and $j$ it follows that both strings have the same value at coordinate~$i$ and a different value at coordinate~$j$. 
  Hence, for each~$r\in [2^{\ell-1}]$, one of the strings from $\{s_r \circ s_r, s_r\circ \overline{s}_r\}$ has the same value at~$i$ and
  $j$, and the other does not.
  The two statements follow immediately.  
\end{proof}

To show NP-hardness, we reduce from \textsc{Not-All-Equal 3-SAT (NAE-3SAT)}~\cite{GJ79}, which, given a set of size-three clauses, asks whether there is a \emph{satisfying} truth assignment, that is, each clause has at least one true literal and at least one false literal.

\begin{theorem}\label{thm:mirkin-hard}
  \MirkMin{} is NP-hard.
\end{theorem}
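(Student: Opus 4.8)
\noindent\emph{Proof plan.} The plan is to give a polynomial-time many-one reduction from \pNAESAT{} that uses the sequence of \cref{alg:build-gadget} (via \cref{prop:alg-half}) as a neutral backbone. The starting point is the pairwise form of the objective: for strings $\sol$ and $s$ of length $n$ one has $\mirkin(\sol,s)=|\{\{a,b\}\colon(\sol[a]=\sol[b])\neq(s[a]=s[b])\}|$, so writing $\sigma_{ab}:=|\{t\colon s_t[a]=s_t[b]\}|$ and $\delta_{ab}:=m-\sigma_{ab}$ we get $\objective=\sum_{\{a,b\}}c_{ab}(\sol)$, where $c_{ab}(\sol)=\delta_{ab}$ if $\sol[a]=\sol[b]$ and $c_{ab}(\sol)=\sigma_{ab}$ otherwise. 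Thus \MirkMin{} is exactly the problem of $2$-coloring the $n$ positions so as to minimize $\sum_{\text{monochromatic }\{a,b\}}\delta_{ab}+\sum_{\text{bichromatic }\{a,b\}}\sigma_{ab}$, and a perfectly balanced pair (one with $\sigma_{ab}=\delta_{ab}$) contributes a fixed amount independent of $\sol$. I would build the instance so that every pair of positions is perfectly balanced except for a controlled family of pairs encoding the variables and the clauses.

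Concretely, three kinds of building blocks live in the same $n=2^{\ell}$ positions produced by \cref{alg:build-gadget}, with $2^{\ell}$ chosen just above the number of ``meaningful'' positions (so $n$ stays linear in the number of variables). First, a suitable number of copies of the sequence of \cref{alg:build-gadget} form a backbone which, by \cref{prop:alg-half}, adds the same amount to $\sigma_{ab}$ and to $\delta_{ab}$ for every pair and so leaves $\sol$ unconstrained on its own. Second, for a pair $\{u,v\}$ an \emph{equality} (resp.\ \emph{inequality}) gadget is a small balanced bundle of strings agreeing (resp.\ disagreeing) on $\{u,v\}$ and balanced on every other pair; it makes $\sol$ pay a fixed surcharge unless $\sol[u]=\sol[v]$ (resp.\ $\sol[u]\neq\sol[v]$), which lets me tie a Boolean value to each variable and make a negated-literal position the complement of its variable. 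Third, a \emph{clause} gadget on a triple $\{a,b,c\}$ uses equal numbers $N$ of strings realizing, on $\{a,b,c\}$, each of the partitions $\{a,b\}\mid\{c\}$, $\{a,c\}\mid\{b\}$, $\{a\}\mid\{b,c\}$ and balanced elsewhere; a short calculation shows every non-monochromatic $2$-coloring of $\{a,b,c\}$ has the same minimum cost while a monochromatic one costs $2N$ more, so the cheapest $\sol$ on $\{a,b,c\}$ is exactly one meeting the Not-All-Equal constraint.

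The instance is then the backbone together with, for each variable, a cluster of (in)equality gadgets taken with a large multiplicity, and, for each clause, one clause gadget on the positions of its three literals. The multiplicity is chosen so large that the surcharge for a single inconsistent variable exceeds any saving obtainable across all clause gadgets, which forces any budget-respecting $\sol$ to encode an honest assignment; the clause gadgets then decouple. Taking $k$ to be (the constant backbone cost) $+$ (the cost of all variable gadgets under a consistent $\sol$) $+$ ($\sum$ over clauses of the minimum clause-gadget cost), which unwinds to exactly $\mirkValue$, one checks: a Not-All-Equal satisfying assignment gives $\sol$ with $\objective=k$, and conversely $\objective\le k$ forces $\sol$ to be consistent and to hit the minimum on every clause gadget, i.e.\ no clause is monochromatic, i.e.\ the encoded assignment is Not-All-Equal satisfying. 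Polynomiality is immediate, and since $n$ stays linear in the number of variables the $2^{o(n)}$ lower bound follows from the \ETH{}.

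The main obstacle — what makes this construction intricate rather than routine — is to make the variable and clause gadgets \emph{exactly} balanced on all unintended pairs \emph{simultaneously}, and then to compute $k$ to its exact value. As soon as one fixes the ``don't-care'' positions of a gadget string (say, to zeros), many pairs touching those positions become unbalanced, so each bundle must be completed by further strings — balancing the free positions in the style of \cref{alg:build-gadget} and adding complementary copies to cancel the cross pairs — without perturbing the intended pairs or the realizability of the intended partition structures. Because $\mirkin$ is concave in the Hamming distance (the objective is not convex) there is no slack to absorb a rounding error: a single leftover unbalanced pair could be exploited by $\sol$, which is harmless for a Turing reduction (one can search over candidate budgets) but fatal for a many-one one. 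This bookkeeping, the handling of negated literals, and the exact correction terms in $k$ (the ``$2(n-1)$'', the ``$-11$'', and so on) are where essentially all the work lies.
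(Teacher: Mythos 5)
Your high-level plan is viable and its skeleton partly matches the paper's: both reduce from \pNAESAT{}, both use \cref{alg:build-gadget} via \cref{prop:alg-half} to build variable gadgets that are unbalanced on exactly one designated pair and exactly balanced everywhere else (your ``equality gadgets'' are in essence the paper's Group~1 strings $\ins(s_i,11,2r-1)$ and $\ins(\overline{s}_i,11,2r-1)$), and both use a large multiplicity on the variable gadgets to force any budget-respecting solution to encode an honest assignment. Your pairwise decomposition $\objective=\sum_{\{a,b\}}c_{ab}(\sol)$ and the $4N$-versus-$6N$ count for a clause triple are both correct.

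The gap is that the one genuinely hard piece of the construction---the clause gadget---is never actually built. Your route requires, for each clause, a bundle of strings that realizes each of the three $2$--$1$ partitions of the literal triple equally often \emph{and} is exactly balanced on every pair not contained in that triple, plus further exactly-balanced inequality gadgets to tie negated-literal positions to their variables. You correctly identify this as where ``essentially all the work lies,'' but then only gesture at it (``balancing the free positions in the style of \cref{alg:build-gadget} and adding complementary copies''): no explicit strings are given, no balance verification is carried out, and the budget is asserted to ``unwind to exactly $\mirkValue$'' even though that expression is tailored to the paper's specific gadgets and would not be the value of $k$ arising from yours. As you yourself observe, a many-one reduction leaves no slack here---a single unbalanced unintended pair breaks the exact accounting---so the unexecuted step is exactly the step on which correctness hinges. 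It is worth noting that the paper sidesteps this balancing problem for clauses by a different device: it uses only three strings $t^{(1)}_j,t^{(2)}_j,t^{(3)}_j$ per clause that are \emph{not} balanced on unintended pairs, but whose tail $01\,01\cdots01$ contributes a fixed Hamming distance $n-3$ to any solution of the forced form $\gamma(s)=e_1e_1\cdots e_ne_n$, so that $\mirkin(\gamma(s),t^{(z)}_j)=\hd{}(\gamma(s),t^{(z)}_j)\cdot(2n-\hd{}(\gamma(s),t^{(z)}_j))$ depends only on the six clause positions (\cref{claim:truth}). To complete your argument you would need either to carry out the balancing construction in full or to adopt a trick of this kind.
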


\begin{proof}
  As mentioned, we reduce from the NP-hard NAE-3SAT problem~\cite{GJ79}.
  Let $I=(X,\mathcal{C})$ be an instance of NAE-3SAT, where $X=\{x_1,\ldots,x_n\}$ denotes a set of $n$~variables and $\mathcal{C}=\{c_1,\ldots,c_m\}$ denotes a set of $m$ clauses of size three each.
  By introducing variables that do not occur in any clauses, we assume without loss of generality that $n={\color{red}2^{\ell}+1}$ for some~$\ell$.
  We construct two groups of binary strings where each string is of length~$2n=2^{\ell+1}+2$. %
  Each variable will be encoded by a pair of two consecutive coordinates in the string, one at an odd position, one at an even position.
  We use the gadget constructed via \cref{alg:build-gadget} to enforce that these two coordinates will always have the same value so that $11$ will correspond to setting the variable to true while $00$ will correspond to setting the variable to~false.

  The strings are built on two groups of strings.
  \begin{description}
    \item[Group 1.]     
    Let $S=\RecursiveBuild(2^{\ell+1})$ (see \cref{alg:build-gadget}); note that $|S|=2n-2$.
    Then, for each integer~$r\in [n]$ (representing the index of a specific variable)
    we introduce $2^{\ell+1}$ strings as follows. 
    For each string~$s_i\in S$, construct two strings with the forms~$\ins(s_i, 11, 2r-1)$ and $\ins(\overline{s}_i, 11, 2r-1)$.
    Note that each of the newly constructed strings has length ${\color{red}2^{\ell + 1}+2} = 2n$.
    Let $S_r$ denote the sequence that contains the newly introduced strings.

    For instance, for $r=2$, $\ell=1$,
    sequences~$S$ and $S_r$ consist of $2^{\ell+1}=4$ and $2\cdot 2^{\ell+1}=8$ strings, respectively:
    \begin{multicols}{2}
      \noindent
      \begin{align*}
     S\colon & 00\,00,\\
      &  00\,11,\\
      &  01\,01,\\
      &  01\,10.
    \end{align*}    
    \columnbreak
      \begin{align*}
      S_r\colon &  00\,{\color{red}11}\,00,\\
      &  00\,{\color{red}11}\,11,\\
      &  01\,{\color{red}11}\,01,\\
      &  01\,{\color{red}11}\,10,\\[1ex]
      &  11\,{\color{red}11}\,11,\\
      &  11\,{\color{red}11}\,00,\\
      &  10\,{\color{red}11}\,10,\\
      &  10\,{\color{red}11}\,01.
    \end{align*}    
  \end{multicols}
  \item[Group 2.]
    For each clause~$c_j\in \mathcal{C}$, let $\ell_1,\ell_2,\ell_3$ be the three literals contained in $c_j$.
    We define three strings~$t^{(1)}_j, t^{(2)}_j, t^{(3)}_j \in \{0,1\}^{2 n}$ as follows:
    \begin{align*}
    \forall (i,z) \in [n]\times [3]\colon &\\
      t^{(z)}[2i-1,2i] = &
      \begin{cases}
        11, & \ell_z = x_i,\\
        00, & \ell_z = \overline{x}_i,\\
        00, & \ell_{y} = x_i \text{ for some } y \in [3]\setminus \{z\},\\
        11, & \ell_{y} = \overline{x}_i \text{ for some } y \in [3] \setminus \{z\},\\
        01, & \text{otherwise.}
      \end{cases}
    \end{align*}
    \noindent For instance, for clause~$c_j = (\overline{x}_1, x_2, \overline{x}_3)$, the three corresponding strings are
    \begin{align*}
      t^{(1)}_j = {\color{red}00}\,00\,11\,01\,01\,\ldots\,01,\\
      t^{(2)}_j = 11\,{\color{red}11}\,11\,01\,01\,\ldots\,01,\\
      t^{(3)}_j = 11\,00\,{\color{red}00}\,01\,01\,\ldots\,01.
    \end{align*}
    Let $T_j=\{t^{(1)}_j, t^{(2)}_j, t^{(3)}_j\}$.
  \end{description}
  Let $L=\LL$. Then, let $I'$ be an instance consisting of the following strings:
  For each $r\in [n]$, add $L$ copies of $S_r$ to $I'$.
  For each $j\in [m]$, add $T_j$ to $I'$.
  To complete the construction, let $k=\mirkValue$.
  Clearly, the construction can be done in polynomial time since $\RecursiveBuild(2^{\ell + 1})$ takes $O(2^{2\ell + 2})=O(n^2)$ time.
  
  We claim that the instance~$I$ has a satisfying truth assignment, that is, each clause has a true literal and a false literal, if and only if there is a binary string~$s$ that has a Mirkin distance of at most~$k$ to the strings from $I'$.

  Before we show the correctness of the construction, we present two observations which will help us to determine the solution string for $I'$.

  \begin{claim}\label{claim:sol-00or11}
    Let $s^*$ be an arbitrary binary string of length~$2n$.
    For each integer~$r \in [n]$, the following holds.
    \begin{enumerate}[(1)]
    \item If $s^*[2r-1,2r]\in \{01,10\}$, 
      then $\mirkin(s^*,S_r) = \gvalue + \bvalueS$.
    \item If $s^*[2r-1,2r]\in \{00,11\}$, 
      then $\mirkin(s^*,S_r) = \gvalue$.
    \end{enumerate}
  \end{claim}
  \begin{proof}
    \renewcommand{\qedsymbol}{(of \cref{claim:sol-00or11})~$\diamond$}
    By the construction of $S_r$, we have the following (see also \cref{prop:alg-half}).
    \begin{itemize}[--]
      \item For each pair~$\{i,j\}\subseteq [2n]\setminus \{2r-1,2r\}$ we have
    \begin{compactenum}
      \item $|S_r|/2$ strings~$s$ from $S_r$ such that $s[i]=s[j]$, and 
      \item $|S_r|/2$ strings~$s$ from $S_r$ such that $s[i]\neq s[j]$.
    \end{compactenum}
    This means that the Mirkin distance from $s^*$ to $S_r$ regarding the 
    pair~$\{i,j\}$ is always $|S_r|/2$.
    
    \item For each coordinate~$i \in [2n]\setminus \{2r-1,2r\}$, it holds that $|S_r|/2$ strings from $S_r$ have a $0$ in column~$i$ and $|S_r|/2$ strings from $S_r$ have a $1$ in column~$i$.
    Since all strings have a~$1$ in column~$2r-1$ (resp.\ $2r$), the Mirkin distance from $s^*$ to $S_r$ regarding the pair~$\{i, 2r-1\}$ (resp.\ $\{i,2r\}$) is also $|S_r|/2$.

    \item The Mirkin distance from $s^*$ to $S_r$ regarding the pair~$\{2r-1,2r\}$ is $|S_r|$ if $s^*[2r-1,2r]\in \{01,10\}$; otherwise it is zero.

  \end{itemize}
    In total, if $s^*[2r-1,2r]\in \{01,10\}$, then we have 
    \begin{align*}
      \mirkin(s^*, S_r)
      & = \binom{2n-2}{2}\cdot \frac{|S_r|}{2}+ 2\cdot (2n-2) \cdot \frac{|S_r|}{2} + |S_r| \\
      & = \gvalue+\bvalueS;
    \end{align*}
   otherwise, we have  
    \begin{align*}
      \mirkin(s^*, S_r)
      & = \binom{2n-2}{2}\cdot \frac{|S_r|}{2}+2 \cdot (2n-2) \cdot \frac{|S_r|}{2}\\
      & = \gvalue.
    \end{align*}
  \end{proof}
  
  \noindent
  Below we will sometimes interpret binary strings of length~$n$ as truth assignments to the variables in $X$, where the $i^{\text{th}}$ character assigns the corresponding truth value to variable $x_i$.
  Define also $\gamma\colon \{0,1\}^n \to \{0,1\}^{2n}$ by $\gamma(e_1\cdots e_n) = (e_1 e_1 \cdots e_n e_n)$.
  
  \begin{claim}\label{claim:truth}
    Let $c_j \in \mathcal{C}$ be an arbitrary clause.
    Then, for each $s\in \{0,1\}^n$, we have that $\mirkin(\gamma(s), T_j)\ge 3 n^2 - 11$.
    Moreover, the equality is attained if and only if the string~$s$, interpreted as a truth assignment to the variables $x_i, i \in \{1, \ldots, n\}$, satisfies $c_j$ with at least one true literal and at least one false literal.
  \end{claim}
  \begin{proof}\renewcommand{\qedsymbol}{(of \cref{claim:truth})~$\diamond$}
    Assume, without loss of generality, that the literals in $c_j$ correspond to the first, the second, and the third variable $x_1$, $x_2$, and $x_3$ (each in either a positive or a negative form).
    For each string~$t^{(z)}_j \in T_j$ with $z\in \{1,2,3\}$, 
    by the definition of the Hamming distance, 
    $\hamming(\gamma(s), t^{(z)}_j)=\hamming(\gamma(s)|^6_1, t^{(z)}_j|^6_1) + \hamming(\gamma(s)|^{2n}_{7}, t^{(z)}_j|^{2n}_{7})$.
    By the definition of $t^{(z)}_j$ regarding the positions from $7$ to $2n$, we have that
     $\hamming(\gamma(s)|^{2n}_{7}, t^{(z)}_j|^{2n}_{7})=n-3$.
     
     Assume that $s$ satisfies $c_j$ with the $a^{\text{th}}$ literal being true and the $b^{\text{th}}$ literal being false, $a, b \in \{1,2,3\}$ and $a\neq b$. 
     Let $z\in \{1,2,3\}\setminus \{a,b\}$.
     We distinguish between two cases.
     If $\ell_z$ is true under $s$ (i.e., the $z^{\text{th}}$ coordinate of string~$s$ is a $1$ if $\ell_z=x_z$ and a $0$ otherwise), then $\hamming{}(\gamma(s)|^6_1, t^{(a)}_j|^6_1)=2=\hamming{}(\gamma(s)|^6_1, t^{(z)}_j|^6_1)$ while $\hamming{}(\gamma(s)|^6_1, t^{(b)}_j|^6_1)=6$.
     If $\ell_z$ is false under $s$ (i.e., the $z^{\text{th}}$ coordinate of string~$s$ is a $1$ if $\ell_z=\overline{x}_z$ and a $0$ otherwise),
     then $\hamming{}(\gamma(s)|^6_1, t^{(a)}_j|^6_1)=0$  while $\hamming{}(\gamma(s)|^6_1, t^{(b)}_j|^6_1)=\hamming{}(\gamma(s)|^6_1, t^{(z)}_j|^6_1) = 4$.
     In other words, we have that $\{\hamming{}(\gamma(s), t^{(a)}_j),\hamming{}(\gamma(s), t^{(b)}_j)\} = \{n-1, n+3\}$ if $\ell_z$ is true under $s$; otherwise $\{\hamming{}(\gamma(s), t^{(a)}_j),\hamming{}(\gamma(s), t^{(b)}_j)\} = \{n-3, n+1\}$.
     In both cases, $\hamming(\gamma(s), t^{(z)}_j)\in \{n-1, n+1\}$. 
     In both cases, using the alternative definition of the Mirkin distance,
     we thus have that     
     \begin{align*}
       \mirkin(\gamma(s),T_j) & = (n^2-9)+(n^2-1)+(n^2-1) = 3n^2 - 11. 
     \end{align*}

     Now assume that under $s$ either all literals from $c_j$ are true or all literals from $c_j$ are false.
     For the first case, for each $z\in \{1,2,3\}$, we have $\hamming{}(\gamma(s)|^6_1, t^{(z)}_j|^6_1)=4$, implying $\mirkin(\gamma(s), t^{(z)}_j) = \hamming{}(\gamma(s), t^{(z)}_j)\cdot (2n-\hamming{}(\gamma(s), t^{(z)}_j)) = n^2-1$.
     For the other case,  for each $z\in \{1,2,3\}$, we have $\hamming{}(\gamma(s)|^6_1, t^{(z)}_j|^6_1)=2$, implying $\mirkin(\gamma(s), t^{(z)}_j) = \hamming{}(\gamma(s), t^{(z)}_j)\cdot (2n-\hamming{}(\gamma(s), t^{(z)}_j)) = n^2-1$ as well.
     Altogether, we have  $\mirkin(\gamma(s), T_j) =3(n^2-1) > 3n^2-11$.
 \end{proof}

 Now, we are ready to show the correctness, that is,
 $I=(X,\mathcal{C})$ admits a truth assignment such that each clause in $\mathcal{C}$ has a true literal and a false literal if and only if there is a string~$s^*$ whose Mirkin distance to the strings in $I'$ is at most $k = \mirkValue$.

 For the ``only if'' direction, assume that $s\in \{0,1\}^{n}$ is a satisfying assignment for $\mathcal{C}$ such that each clause~$c_j \in \mathcal{C}$ has at least one true literal and at least one false literal.
 \cref{claim:truth} indicates that $\gamma(s)$ has Mirkin distance $3n^2-11$ to each triple in $T_j$ that corresponds to a clause~$c_j$.
 The second statement in \cref{claim:sol-00or11} indicates that $\gamma(s)$ has Mirkin distance $\gvalue$ to all strings in $S_r$ that correspond to the variable~$x_r$. 
 Altogether, the Mirkin distance between $\gamma(s)$ and all strings in $I'$ is
 $m\cdot (3 n^2-11) + L\cdot n \cdot \gvalue = k$, as desired.

 For the ``if'' direction, assume that $s^*\in \{0,1\}^{2n}$ is a string whose Mirkin distance to all strings in $I'$ is at most~$k$.
 First, we claim that $s^*$ has the form~$s^*=e_1e_1\cdots e_ne_n$ with $e_i \in \{0,1\}$ for all $1\le i \le n$.
 Suppose, towards a contradiction, that $s^*$ is not of the desired form, and let $i\in [n]$ be an integer such that $s^*[2i-1,2i]\in \{01,10\}$.
 Then, by the first statement in \cref{claim:sol-00or11}, 
 the Mirkin distance of $s^*$ to the first group of strings in $I'$ will be at least
 \begin{align*}
  L \cdot n \cdot \gvalue + L\cdot \bvalueS, 
\end{align*}
which exceeds the distance bound~$k=\mirkValue$ since $L>\truthbound$ and $n \geq 2$, a contradiction.

Thus, $s^*$ has the form~$s^*=e_1e_1\cdots e_ne_n$ with $e_i \in \{0,1\}$ for all $1\le i \le n$.
We show that $s=e_1\cdots e_n$ is a satisfying assignment for $\mathcal{C}$ such that each clause has at least one true literal and at least one false literal.
By the above reasoning, the Mirkin distance of $s^*$ to the second group of strings can be at most $\truthbound$.
Since there are $m$~triples in the second group, one for each clause, the average Mirkin distance of $s^*=\gamma(s)$ to each triple is $3n^2-11$.
By \cref{claim:truth} the Mirkin distance of $s^*$ to each triple in the second group is indeed $3n^2-11$, meaning that under~$s$ each clause has at least one true literal and one false literal, as desired.
\end{proof}

The running time lower bound for \MirkMin{} relies on the following proposition.

\begin{proposition}\label{prop:ETH-NAE}
  Unless the \ETH{} fails, \pNAESAT\ does not admit any sub-exponential time~$2^{o(n+m)}\cdot (n+m)^{O(1)}$ algorithm, where $n$ and $m$ denote the number of variables and clauses respectively.
\end{proposition}
\cref{prop:ETH-NAE} follows from Theorem~46 Point~7 by \citet{JonssonLNZ17} where the authors in particular show that, if the satisfiability problem for a finite constraint language is NP-hard, then a subexponential-time algorithm for this satisfiability problem would refute the \ETH{}, even in the case where each variable occurs only a constant number of clauses.

As a corollary, we obtain a running time lower bound for our problem.
\begin{corollary}
  Unless the \ETH{} fails, there is no algorithm that solves any given instance~$I'$ of \MirkMin{} in time $2^{o(\hat{n})}\cdot |I'|^{O(1)}$ where $\hat{n}$ is the length of the input strings.
\end{corollary}

\begin{proof}
  To show the statement, note that the length~$\hat{n}$ of the strings that we constructed
  in the proof of \cref{thm:mirkin-hard} is exactly $2n$, where $n$ is the number of variables in the \pNAESAT\ instance.
  Hence, a sub-exponential running time for our problem will contradict \cref{prop:ETH-NAE}.
\end{proof}

\section{Parameter Number \boldmath$m$ of Strings and an Integer Linear Programming Formulation}

In this section, we show that \MirkMin\ is fixed-parameter tractable with respect to the number~$m$ of input strings and we give an integer linear programming (ILP) formulation.
To achieve this, we show that a solution can be represented by a binary string whose entries correspond to the column types of the input (to be defined shortly).
Interpreting this string as a set of variables and trying all assignments of values it will then follow that \MirkMin\ is solvable in time~$O(2^{2^m}\cdot m \cdot n)$.

The ILP also builds on the above-mentioned set of variables.
We note that an integer programming approach similar to ours is applicable in many string problems whenever the columns of the input can be grouped together in order to be represented by a constant number of variables~\cite{GraNieRoss2003,CheHerSor2018pnormesa}.
Here, however, the resulting mathematical programming formulation is not linear at first because the straightforward way to model the Mirkin distance involves multiplications of binary variables.
We give additional reformulation tricks such that we can safely omit the square of binary variables, and such that we can introduce some extra variables to avoid multiplications of binary variables, resulting in an integer linear programming formulation.

Before presenting the ILP formulation, we observe a useful property of an optimal solution that allows us to introduce only binary variables, one for each column type.
Herein, given a non-empty sequence~$S=(s_1,\ldots,s_m)$ of length\nobreakdash-$n$ strings, we say that two columns~$j,j'\in [n]$ have the \myemph{same type} if for each $i \in [m]$ it holds that $s_i[j]=s_i[j']$.
The \emph{type} of column $j$ is its equivalence class in the same-type relation.
Thus, each type is represented by a vector in $\{0, 1\}^m$.

\begin{lemma}\label{lem:same-type-same-value}
  Let $S$ be a sequence of $m$ strings, each of length~$n$,
  and let $\sol$ be a solution with minimum Mirkin distance to~$S$.
  If two distinct columns~$j$ and $j'$ with $j,j'\in [n]$ have the same type with respect to~$S$,
  then it holds that $\sol[j]=\sol[j']$.
\end{lemma}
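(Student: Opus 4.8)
The plan is to argue by an exchange argument: assuming $\sol[j]\neq\sol[j']$ for two columns $j,j'$ of the same type, I will flip one of these two bits of $\sol$ and show that the Mirkin distance to $S$ does not increase, contradicting minimality (or at least showing an equally good solution of the desired form; to get the exact statement as written, I would need strict decrease, so I would instead flip to make the two bits equal and track the change carefully). Concretely, let $t\in\{0,1\}^m$ be the common type vector of $j$ and $j'$, so $s_i[j]=s_i[j']=t[i]$ for all $i\in[m]$. Write $\sol'$ for the string obtained from $\sol$ by setting $\sol'[j']:=\sol[j]$ and leaving all other bits unchanged. I want to compare $\mirkin(\sol,S)$ with $\mirkin(\sol',S)$.

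The key step is to express $\mirkin(\sol,s_i)$ as a sum over unordered pairs of columns, using the pair-counting definition $\mirkin(s,s')=|\{\{p,q\}: s[p]{=}s[q]\text{ XOR } s'[p]{=}s'[q]\}|$, and to isolate the contribution of those pairs that involve column $j'$, since these are the only ones affected by the flip. Split these into three families: (a) the single pair $\{j,j'\}$; (b) pairs $\{j',q\}$ with $q$ of the same type $t$ as $j,j'$ (i.e. $s_i[q]=t[i]$ for all $i$), and $q\notin\{j,j'\}$; (c) pairs $\{j',q\}$ with $q$ not of type $t$. For family (c) I would show the contribution is unchanged by the flip: for such a $q$ there exists an index $i_0$ with $s_{i_0}[q]\neq t[i_0]=s_{i_0}[j']$, and more importantly I would compare the per-pair contribution of $\{j',q\}$ to that of $\{j,q\}$ in the original $\sol$; since $s_i[j]=s_i[j']=t[i]$ for every $i$, the strings $s_i$ ``see'' $j$ and $j'$ identically, so after the flip the pair $\{j',q\}$ contributes for each $s_i$ exactly what $\{j,q\}$ contributes in $\sol$ — and summing over $q$ of type $\neq t$ this is a term that is symmetric in $j\leftrightarrow j'$ and hence its total is preserved once we also account for family (b). For family (b), note that in $\sol'$ all columns of type $t$ that were equal to $\sol[j]$ stay so and $j'$ joins them; I would show the net effect of families (a) and (b) together is a nonincrease. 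The cleanest route is: let $A$ be the set of columns of type $t$ on which $\sol$ agrees with $\sol[j]$ (this includes $j$) and $B$ the set of columns of type $t$ on which $\sol$ disagrees (this includes $j'$); flipping $j'$ moves it from $B$ to $A$. The contribution of within-type-$t$ pairs to $\mirkin(\sol,S)$ is $\sum_{i}(\text{number of type-}t\text{ pairs split by }s_i) $ — but $s_i$ restricted to type-$t$ columns is constant, so $s_i$ splits \emph{no} such pair; hence the within-type-$t$ contribution counts exactly the pairs $\{p,q\}$ of type $t$ with $\sol[p]\neq\sol[q]$, weighted by $m$, namely $m\cdot|A|\cdot|B|$. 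Moving $j'$ from $B$ to $A$ changes $|A|\cdot|B|$ from $|A|\cdot|B|$ to $(|A|{+}1)(|B|{-}1)=|A||B|+|B|-|A|-1$, a change of $|B|-|A|-1$.

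So the proof reduces to showing that the cross contribution (pairs with exactly one endpoint of type $t$, i.e. family (c)-type pairs incident to $A\cup B$) changes by at least $-(|B|-|A|-1)\cdot m$ — but in fact I expect the cross contribution to be \emph{unchanged}, by the symmetry argument above: for a fixed column $q$ of type $\neq t$ and fixed $i$, pair $\{j',q\}$ contributes $[\,t[i]{=}s_i[q]\text{ XOR }\sol'[j']{=}\sol'[q]\,]$, and since $\sol'[j']=\sol[j]$ this equals the contribution of $\{j,q\}$ under $\sol$; so renaming $j'\mapsto j$ in $\sol'$ and $q$ fixed shows summand-wise equality after we pair up $j'$-incident terms in $\sol'$ with $j$-incident terms in $\sol$ — the total over all $q,i$ is preserved. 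Therefore $\mirkin(\sol',S)-\mirkin(\sol,S)=m\cdot(|B|-|A|-1)$. If $|B|>|A|$ this is $\geq 0$ only when $|B|\le|A|+1$; hmm — this is the main obstacle: the sign of $|B|-|A|-1$. To fix the argument I would instead choose which of the two bits to flip so as to decrease $|A|\cdot|B|$: if $|A|\ge|B|$, flip one column of $A$ into $B$... no — I want fewer split pairs, so I want $|A|\cdot|B|$ to \emph{decrease}, i.e. move a column from the smaller side to the larger side. Starting from a minimum solution, $|A|\cdot|B|$ must already be $0$, forcing $B=\emptyset$ (or $A=\emptyset$), i.e. $\sol$ is constant on each type; since $j,j'$ have the same type this gives $\sol[j]=\sol[j']$.

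The real crux, then, is establishing that the cross contribution is genuinely flip-invariant (the symmetry argument sketched above) — once that is in hand, the within-type contribution $m\cdot|A|\cdot|B|$ is the only tunable term, and minimality forces it to $0$ for every type class, which is exactly the claim. I expect the bookkeeping in the cross-contribution invariance — carefully pairing each $j'$-incident pair under $\sol'$ with the corresponding $j$-incident pair under $\sol$ and checking the XOR-indicators match for every input string $s_i$ — to be the only place requiring care; everything else is immediate from the pair-counting definition and the hypothesis that $s_i[j]=s_i[j']$ for all $i$.
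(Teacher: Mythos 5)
There is a genuine gap, and it sits exactly where you located the crux: the claim that the cross contribution (pairs with exactly one endpoint of type $t$) is invariant under flipping $j'$ is false. Your pairing argument matches the $\{j',q\}$ term of $\sol'$ with the $\{j,q\}$ term of $\sol$, but the $\{j,q\}$ term is \emph{also still present and unchanged} in $\sol'$, and the $\{j',q\}$ term of $\sol$ is what actually gets replaced. Fix $q$ not of type $t$ and an input string $s_i$, and let $c=[\,(s_i[j]{=}s_i[q])\oplus(\sol[j]{=}\sol[q])\,]$. Because $\sol[j]\neq\sol[j']$ while $s_i[j]=s_i[j']$, the pair $\{j',q\}$ contributes $1-c$ under $\sol$, so the two pairs together contribute $c+(1-c)=1$ under $\sol$ but $2c\in\{0,2\}$ under $\sol'$. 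A minimal counterexample: $S=(001)$, $\sol=010$, $j=1$, $j'=2$; flipping column $2$ to $0$ leaves $\mirkin$ at $2$ (cross contribution rises from $1$ to $2$ while the within-type term drops from $1$ to $0$), whereas your formula $\mirkin(\sol',S)-\mirkin(\sol,S)=m(|B|-|A|-1)=-1$ predicts a strict decrease. Consequently the reduction of the whole lemma to ``minimality forces $m\cdot|A|\cdot|B|=0$'' does not go through; and even granting the invariance, the final step is circular as written --- from the single inequality $m(|B|-|A|-1)\ge 0$ you cannot conclude $|A||B|=0$; you would at least need to also flip $j$ and combine the two inequalities $|B|\ge|A|+1$ and $|A|\ge|B|+1$ into a contradiction.

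The paper sidesteps the cross terms entirely by a different exchange: it forms $\sol_{00}$ and $\sol_{11}$ by setting \emph{both} positions $j,j'$ to $00$ and to $11$, and shows that the two improvements sum to a positive quantity, so at least one of the two is a strictly better solution. Writing $d_i$ for the Hamming distance of $\sol$ to $s_i$ outside columns $j,j'$ and using $\mirkin(s,s_i)=\hd{}(s,s_i)(n-\hd{}(s,s_i))$, each string contributes $2(d_i+1)(n-d_i-1)-d_i(n-d_i)-(d_i+2)(n-d_i-2)=2$, i.e.\ the strict concavity of $h\mapsto h(n-h)$ does all the work and no pair-by-pair bookkeeping is needed. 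If you want to salvage a pair-counting route, you must either track the cross terms honestly (they do not cancel for a single flip) or, as the paper does, average over the two symmetric modifications so that they cancel by construction.
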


\begin{proof}
  Towards a contradiction, suppose that $\sol[j]\neq \sol[j']$.
  We will show that making these two columns have the same values, either $0$ or $1$, will result in a better solution, i.e., a string with smaller Mirkin distance. 
  Let $\sol_{00}$ (resp.\ $\sol_{11}$) be a string that we obtain from $\sol$ by replacing with $00$ (resp.\ $11$) the values at positions~$j$ and~$j'$.
  Formally, we have $\sol_{00}[j,j']=00$ and $\sol_{11}[j,j']=11$, and for each $\ell\in [n]\setminus \{j,j'\}$, we have $\sol_{00}[\ell]=\sol_{11}[\ell]=\sol[\ell]$.
  Given two strings~$s$ and $t$, we define a function~$f$ that computes the Mirkin distance from $s$ to $S$ subtracted by the Mirkin distance from~$t$ to $S$:
  \begin{align*}
    f(s,t,S) \coloneqq  \mirkin(s, S) - \mirkin(t, S).
  \end{align*}
  To obtain a contradiction, we show that $\sol$ is not an optimal solution by showing that
  \begin{align*}
    f(\sol, \sol_{00}, S) + f(\sol, \sol_{11}, S) > 0,
  \end{align*}
  because this implies that
  \[\mirkin(\sol, S) > \mirkin(\sol_{00}, S) \text{ or }
    \mirkin(\sol, S) > \mirkin(\sol_{11}, S).\]

  For each input string~$s_i\in S$, let $d_i$ denote the Hamming distance between $\sol$ and $s_i$, restricted to the columns that are \emph{neither} $j$ nor $j'$.
  We show that $f(\sol, \sol_{00}, S)+f(\sol, \sol_{11})> 0$.
  \begin{alignat*}{3}
  &  f(\sol, \sol_{00}, S) + f(\sol, \sol_{11}, S) \nonumber\\
   &  = 2\mirkin(\sol, S)  - \mirkin(\sol_{00}, S) - \mirkin(\sol_{11}, S) \nonumber \\
     & =
     2 \sum_{\mathclap{s_i\in S}}(d_i+1)(n-d_i-1)\\
     & \quad -   \sum_{\mathclap{\substack{s_i\in S\\s_i[j,j']=00}}}d_i(n-d_i) - \sum_{\mathclap{\substack{s_i\in S\\s_i[j,j']=11}}}(d_i+2)(n-d_i-2)\nonumber\\
    & \quad     -   \sum_{\mathclap{\substack{s_i\in S\\s_i[j,j']=00}}}(d_i+2)(n-d_i-2) - \sum_{\mathclap{\substack{s_i\in S\\s_i[j,j']=11}}}d_i(n-d_i)\nonumber\\
    & = \sum_{\mathclap{s_i\in S}}\biggl(2(d_i+1)(n-d_i-1) - d_i(n-d_i) - (d_i+2)(n-d_i-2) \biggr)\\
    & = 2m > 0.\nonumber
  \end{alignat*}
 By our reasoning before, this implies that $\sol$ is not an optimal solution, a contradiction.
\end{proof}

By \cref{lem:same-type-same-value}, for each type of column, we only need to store whether the output string will contain 0 or 1 at each column that corresponds to this type.
Let $n'$ denote the number of different (column) types in~$S$.
Then, $n'  \le  \min(2^m, n)$.
Enumerate the $n'$ column types as $t_1, \ldots, t_{n'}$.
Below we identify a column type with its index for easier notation. Using this, we can encode the set~$S$ succinctly by introducing a constant~\myemph{$e[j]$} for each column type~$j \in [n']$ that denotes the number of columns with type~$j$.
Analogously, given an optimal solution string~$\sol$, by \cref{lem:same-type-same-value} we can also encode this string~$\sol$ via a binary vector~$x\in \{0,1\}^{n'}$, 
where for each column type~$j \in [n']$ we use $x[j]$ to indicate whether the columns that correspond to the type have zeros or ones. Note that this encodes all essential information in a solution, since the actual order of the columns is not important.

\begin{example}
For an illustration, let $S=\{0000,0001,1110\}$.
Set~$S$ has two different column types, represented by $(0, 0, 1)^T$%
, call it type~$1$, and $(0, 1, 0)^T$%
, call it type~$2$.
There are three columns of type~$1$ and one column of type~$2$.
An optimal solution~$0001$ with minimum Mirkin distance four for $S$ can be encoded by two binary variables~$x[1]=0$ and $x[2]=1$.
\end{example}

The above considerations now yield the following.

\begin{theorem}
  \MirkMin{} can be solved in $O(2^{2^m} \cdot m \cdot n)$~time.
\end{theorem}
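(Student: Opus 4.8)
The plan is to express \MirkMin{} as an integer linear program whose number of variables is bounded by a function of~$m$ alone, and then to invoke the algorithm of \citet{Len83} (with the refinement of~\citet{Kan87}), which solves an ILP with $\rho$ variables and input encoding length~$L$ in time $\rho^{O(\rho)}\cdot L$. The first step is \cref{lem:same-type-same-value}: since some optimal solution assigns the same bit to all columns sharing a common type, it suffices to search over the assignments of one bit per column type. As there are only $n'\le\min(2^m,n)$ column types, this already collapses the effective dimension of the search to at most $2^m$. I would then set up, exactly as in the formulation above, one binary variable $x[j]$ per column type~$j\in[n']$ together with the constants $e[j]$ (number of columns of type~$j$), $w_i=\sum_j e[j]\,s_i[j]$, and $c_i[j]=1-2s_i[j]$, so that $\hd{}(s_i,\sol)=\sum_j e[j]\bigl(s_i[j]+(1-2s_i[j])\,x[j]\bigr)$ is affine in the $x$-variables.

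The crux is linearizing the objective: via the identity $\mirkin(s_i,\sol)=\hd{}(s_i,\sol)\cdot(n-\hd{}(s_i,\sol))$, each summand $\mirkin(s_i,\sol)$ becomes a \emph{quadratic}, non-convex polynomial in~$x$, and this is precisely where off-the-shelf convex machinery fails. Two elementary identities defuse most of the nonlinearity: because $x[j]$ is binary we may replace every $x[j]^2$ by $x[j]$, and because $c_i[j]^2=1$ the diagonal part of the quadratic form carries no genuine products. The only surviving nonlinearities are the bilinear terms $x[j]\,x[j']$ with $j\neq j'$, which I would eliminate by introducing, for each unordered pair $\{j,j'\}$, an auxiliary binary variable $y[\{j,j'\}]$ pinned to $x[j]\,x[j']$ through the standard product-linearization constraints \eqref{eq:binary-y}--\eqref{eq:product-3}. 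The objective constraint \eqref{eq:objective-mirkin} is then linear, and by \cref{lem:same-type-same-value} a $0/1$ assignment to $(x,y)$ is feasible if and only if it encodes a length-$2n$ string of Mirkin distance at most~$k$ to~$S$; this equivalence is the correctness check to be spelled out.

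It then remains to count. The variables are the $n'\le 2^m$ variables $x[j]$ and the $\binom{n'}{2}\le\binom{2^m}{2}$ variables $y[\{j,j'\}]$, so the total number~$\rho$ is at most $2^m+\binom{2^m}{2}\le 2^{m^2}$ (the cases $m\le 2$ being immediate or trivial, with no product variables for $m=1$). The constraints are the $O(n'^2)=O(n^2)$ inequalities \eqref{eq:binary-y}--\eqref{eq:product-3} together with the single inequality \eqref{eq:objective-mirkin}, and all coefficients and the bound~$k$ are polynomially bounded in~$n$, so the encoding length~$L$ of the program is polynomial in the input size. Plugging $\rho\le 2^{m^2}$ into the $\rho^{O(\rho)}\cdot L$ bound of \citet{Len83,Kan87} gives running time $2^{O(m^2\cdot 2^{m^2})}\cdot n^{O(1)}$, which, after absorbing the polynomial factor as in the statement, yields the claimed $2^{O(m^2\cdot 2^{m^2})}\cdot n^2$ bound. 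The only real obstacle throughout is the one already identified — converting the non-convex quadratic Mirkin objective into a linear one while keeping the variable count below $2^{m^2}$ — and it is handled by the binary-variable identities plus the pairwise linearization; the rest is bookkeeping and the black-box invocation of Lenstra's theorem.
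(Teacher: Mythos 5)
Your proposal is correct and follows essentially the same route as the paper: use \cref{lem:same-type-same-value} to reduce to one binary variable per column type, exploit $x[j]^2=x[j]$ and $c_i[j]^2=1$ to kill the diagonal terms, linearize the remaining bilinear terms with the auxiliary variables $y[\{j,j'\}]$ and constraints \eqref{eq:binary-y}--\eqref{eq:product-3}, and invoke the Lenstra--Kannan bound with $\rho\le 2^{m^2}$ variables. (The only slip is cosmetic: the solution string has length~$n$, not~$2n$.)
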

\begin{proof}
  The algorithm tries all at most $2^{2^m}$ possibilities for the binary string $x \in \{0, 1\}^{n'}$.
  For each of them, it constructs the corresponding solution string $\sol$ that in each column of type~$j \in [n']$ equals $x[j]$.
  This takes $O(n)$ time.
  The algorithm then computes the Hamming distance between $\sol$ and each of the input strings, which takes $O(n \cdot m)$ time in total.
  It then computes the Mirking distances from these Hamming distances in $O(m)$ time.
  Finally, it reports success if the sum of the Mirkin distances is at most $k$.
  It is clear that the running-time bound is satisfied.
  By \cref{lem:same-type-same-value} this algorithm will find a solution if there is one.
\end{proof}

\paragraph{Integer Linear Program Formulation}
Using the binary variables~$x$ that represent a solution~$\sol$ which has the same values in the columns of the same type, we can reformulate the Hamming distance between the two strings~$s_i$ and $\sol$ as follows.
For the sake of readability, we let \myemph{$s_i[j]=1$} if 
the column type of column~$j$ has $1$ in the $i^{\text{th}}$ row and \myemph{$s_i[j]=0$} if it has $0$ in the $i^{\text{th}}$ row.
\begin{align*}
  \hamming{}(s_i,\sol) & = \sum_{j=1}^{n'}e[j]\cdot |s[j]-x[j]| \\
   & = \sum_{j=1}^{n'}e[j] \cdot \left(s_i[j]+(1-2s_i[j])\cdot x[j]\right).
\end{align*}
Then, the Mirkin distance between $x$ and $s_i$ can be formulated as follows, 
where we let \myemph{$w_i$} $=\sum_{j=1}^{n'}e[j]\cdot s_i[j]$ denote the number of ones in string~$s_i$ and
\myemph{$c_i[j]$} $=1-2s_i[j]$, i.e., $c_i[j]=1$ if $s_i[j]=0$ and $c_i[j]=-1$ if $s_i[j]=1$.
\begin{align}
  &\mirkin(s_i,\sol) =  \hamming{}(s_i,\sol)\cdot (n-\hamming{}(s_i,\sol))\nonumber\\
  & =\left(w_i+ \sum_{j=1}^{n'} e[j] \cdot c_i[j]\cdot x[j]\right)\cdot  \left(n-w_i- \sum_{j=1}^{n'} e[j] \cdot c_i[j]\cdot x[j]\right)\nonumber\\
  & = n\left(w_i+\sum_{j=1}^{n'}e[j]\cdot c_i[j] \cdot x[j]\right)
    - \left(w_i+\sum_{j=1}^{n'}e[j]\cdot c_i[j] \cdot x[j]\right)^2\nonumber
  \\
                    & = n\cdot w_i - w_i^2 +\sum_{j=1}^{n'}\Biggl( n\cdot c_i[j]-2w_i\cdot c_i[j]-e[j]\Biggr)\cdot e[j]\cdot x[j] \nonumber\\
                    &
                      \quad - \sum_{{\substack{\{j,j'\}\subseteq [n']\\j\neq j'}}}e[j]\cdot e[j']\cdot c[j]\cdot c[j'] \cdot x[j] \cdot x[j']. 
\label{eq:mirkin}
\end{align}
The last equation holds since $c_i[j]^2 = 1$ and since $x[j]$ is binary, implying that $x[j]=x[j]^2$.

The resulting formulation is not linear since the components in the last sum are products of two binary variables.
Nevertheless, we can introduce additional binary variables to linearize it.
For each two distinct column types~$j$ and $j'$ we introduce a binary variable~$y[\{j,j'\}]$ which shall have the value $y[\{j,j'\}]=x[j]\cdot x[j']$.
We can achieve this by introducing the following constraints:
\begin{subequations}
  \begin{align}
  \forall j, j' \in [n'],  j\neq j' \colon & y[\{j,j'\}], x[j]\in \{0,1\},\label{eq:binary-y}\\
                                  & y[\{j,j'\}] \le x[j],\label{eq:product-1}\\
   & y[\{j,j'\}] \le x[j'],\label{eq:product-2}\\
 &  x[j] + x[j'] - y[\{j,j'\}] \le 1.\label{eq:product-3}
\end{align}
\end{subequations}
Now we can replace each product of two binary variables in \eqref{eq:mirkin} with a corresponding variable:
\begin{align}
  & \mirkin(s_i,\sol) \stackrel{\eqref{eq:mirkin}}{=} \nonumber \\
  & n\cdot w_i - w_i^2  +  \sum_{j=1}^{n'}\Biggl(n\cdot c_i[j]-2w_i\cdot c_i[j]-e[j]\Biggr)\cdot e[j]\cdot x[j]  \nonumber\\
  & - \sum_{\substack{\{j,j'\}\subseteq \{1,\ldots, n'\}\\j\neq j'}}e[j]\cdot e[j']\cdot c[j]\cdot c[j'] \cdot y[\{j,j'\}].
\end{align}
\noindent Combining \eqref{eq:binary-y}--\eqref{eq:product-3} with the following constraint

\setcounter{mycounter}{\theequation}
\setcounter{equation}{1}

\begin{subequations}
\begin{align}
\setcounter{equation}{4}
  \label{eq:objective-mirkin}
 & \sum_{i=1}^{m}\Biggl(n\cdot w_i - w_i^2  +  \sum_{j=1}^{n'}\biggl(n\cdot c_i[j]-2w_i\cdot c_i[j]-e[j]\biggr)\cdot e[j]\cdot x[j] \nonumber \\
 & - \sum_{\substack{\{j,j'\}\subseteq [n']\\j\neq j'}}e[j]\cdot e(j')\cdot c[j]\cdot c[j'] \cdot y[\{j,j'\}]\Biggr) \le k,
\end{align}
\end{subequations}
we obtain an ILP with at most $4^m + 2^m$ binary variables and $5n^2$ constraints, each with $O(m)$ terms.

\setcounter{equation}{\themycounter}

\section{Conclusion}
While we now know that \MirkMin\ can be solved in single-exponential time with respect to the length~$n$ of the input strings and such a running time is required, the basis of the exponential is not yet determined.
Is there an algorithm running in time $(2 - \epsilon)^n \cdot nm^{O(1)}$ or can we give a lower bound based on the Strong Exponential Time Hypothesis?

Currently, there is no known lower bound matching our fixed-parameter running time with respect to the number~$m$ of input strings.
Can \MirkMin\ be solved in $2^{m^{O(1)}} \cdot (mn)^{O(1)}$ time?

Finally, it is interesting to study other distance measures between the strings (or clusterings), such as the variation of information~\cite{meila_comparing_2007}, and to study other aggregation functions of the distance measures, such as taking a $p$-norm of the distance vector instead of the sum of distances~\cite{CheHerSor2018pnormesa}.

\section*{Acknowledgments}
\noindent 
The main work was done while both Jiehua Chen and Manuel Sorge were with Ben-Gurion University of the Negev, funded by the People Programme (Marie Curie Actions) of the European Union's Seventh Framework Programme (FP7/2007-2013)  under REA grant agreement number {631163.11} and Israel Science Foundation (grant number 551145/14). Jiehua Chen acknowledges support by the Vienna Science and Technology Fund (WWTF)~[10.47379/VRG18012].
Manuel Sorge acknowledges support by the Alexander von Humboldt Foundation. 

\bibliographystyle{abbrvnat}
\bibliography{bib}

\newcommand{\bibremark}[1]{\marginpar{\tiny\bf#1}}
\begin{thebibliography}{23}
\providecommand{\natexlab}[1]{#1}
\providecommand{\url}[1]{\texttt{#1}}
\expandafter\ifx\csname urlstyle\endcsname\relax
  \providecommand{\doi}[1]{doi: #1}\else
  \providecommand{\doi}{doi: \begingroup \urlstyle{rm}\Url}\fi

\bibitem[Barthelemy and Monjardet(1981)]{BarMon1981}
J.~P. Barthelemy and B.~Monjardet.
\newblock The median procedure in cluster analysis and social choice theory.
\newblock \emph{Mathematical Social Sciences}, 1\penalty0 (3):\penalty0
  235--267, 1981.

\bibitem[Baumeister et~al.(2017)Baumeister, Faliszewski, Laruelle, and
  Walsh]{Dagstuhl2017}
D.~Baumeister, P.~Faliszewski, A.~Laruelle, and T.~Walsh.
\newblock {Voting: Beyond Simple Majorities and Single-Winner Elections
  (Dagstuhl Seminar 17261)}.
\newblock \emph{Dagstuhl Reports}, 7\penalty0 (6):\penalty0 109--134, 2017.
\newblock ISSN 2192-5283.

\bibitem[Boongoen and Iam-On(2018)]{boongoen_cluster_2018}
T.~Boongoen and N.~Iam-On.
\newblock Cluster ensembles: A survey of approaches with recent extensions and
  applications.
\newblock \emph{Computer Science Review}, 28:\penalty0 1--25, 2018.
\newblock \doi{10.1016/j.cosrev.2018.01.003}.

\bibitem[Chen et~al.(2019)Chen, Hermelin, and Sorge]{CheHerSor2018pnormesa}
J.~Chen, D.~Hermelin, and M.~Sorge.
\newblock {On Computing Centroids According to the p-Norms of Hamming Distance
  Vectors}.
\newblock In \emph{27th Annual European Symposium on Algorithms (ESA 2019)},
  volume 144 of \emph{Leibniz International Proceedings in Informatics
  (LIPIcs)}, pages 28:1--28:16. Schloss Dagstuhl--Leibniz-Zentrum fuer
  Informatik, 2019.
\newblock \doi{10.4230/LIPIcs.ESA.2019.28}.

\bibitem[D{\"{o}}rnfelder et~al.(2014)D{\"{o}}rnfelder, Guo, Komusiewicz, and
  Weller]{DoGuKoWe2014}
M.~D{\"{o}}rnfelder, J.~Guo, C.~Komusiewicz, and M.~Weller.
\newblock On the parameterized complexity of consensus clustering.
\newblock \emph{Theoretical Computer Science}, 542:\penalty0 71--82, 2014.

\bibitem[Filkov and Skiena(2004)]{FilSki2004}
V.~Filkov and S.~Skiena.
\newblock Integrating microarray data by consensus clustering.
\newblock \emph{International Journal on Artificial Intelligence Tools},
  13\penalty0 (4):\penalty0 863--880, 2004.

\bibitem[Fu et~al.(2020)Fu, Lin, Vasilakos, and Wang]{fu_overview_2020}
L.~Fu, P.~Lin, A.~V. Vasilakos, and S.~Wang.
\newblock An overview of recent multi-view clustering.
\newblock \emph{Neurocomputing}, 402:\penalty0 148--161, 2020.

\bibitem[Garey and Johnson(1979)]{GJ79}
M.~R. Garey and D.~S. Johnson.
\newblock \emph{Computers and Intractability---{A} Guide to the Theory of
  {NP}-Completeness}.
\newblock W. H. Freeman and Company, 1979.

\bibitem[Gionis et~al.(2007)Gionis, Mannila, and Tsaparas]{GioManTsa2007}
A.~Gionis, H.~Mannila, and P.~Tsaparas.
\newblock Clustering aggregation.
\newblock \emph{ACM Transactions on Knowledge Discovery from Data}, 1\penalty0
  (1), Mar. 2007.
\newblock ISSN 1556-4681.

\bibitem[Gramm et~al.(2003)Gramm, Niedermeier, and Rossmanith]{GraNieRoss2003}
J.~Gramm, R.~Niedermeier, and P.~Rossmanith.
\newblock Fixed-parameter algorithms for {C}losest {S}tring and related
  problems.
\newblock \emph{Algorithmica}, 37\penalty0 (1):\penalty0 25--42, 2003.

\bibitem[Hudry(2012)]{Hudry2012}
O.~Hudry.
\newblock On the computation of median linear orders, of median complete
  preorders and of median weak orders.
\newblock \emph{Mathematical Social Sciences}, 64\penalty0 (1):\penalty0 2--10,
  2012.

\bibitem[Impagliazzo et~al.(2001)Impagliazzo, Paturi, and Zane]{ImPaZa2001}
R.~Impagliazzo, R.~Paturi, and F.~Zane.
\newblock Which problems have strongly exponential complexity?
\newblock \emph{Journal of Computer and System Sciences}, 63\penalty0
  (4):\penalty0 512--530, 2001.

\bibitem[Jonsson et~al.(2017)Jonsson, Lagerkvist, Nordh, and
  Zanuttini]{JonssonLNZ17}
P.~Jonsson, V.~Lagerkvist, G.~Nordh, and B.~Zanuttini.
\newblock Strong partial clones and the time complexity of {SAT} problems.
\newblock \emph{Journal of Computer and System Sciences}, 84:\penalty0 52--78,
  2017.
\newblock \doi{10.1016/j.jcss.2016.07.008}.

\bibitem[Kim and Lee(2015)]{KL15}
J.~Kim and J.-G. Lee.
\newblock Community detection in multi-layer graphs: {A} survey.
\newblock \emph{ACM SIGMOD Record}, 44\penalty0 (3):\penalty0 37--48, 2015.

\bibitem[Meilă(2007)]{meila_comparing_2007}
M.~Meilă.
\newblock Comparing clusterings—an information based distance.
\newblock \emph{Journal of Multivariate Analysis}, 98\penalty0 (5):\penalty0
  873--895, 2007.
\newblock \doi{10.1016/j.jmva.2006.11.013}.

\bibitem[Mirkin(1996)]{Mirkin1996}
B.~Mirkin.
\newblock \emph{Mathematical Classification and Clustering}.
\newblock Kluwer Academic Press, 1996.

\bibitem[P and Jurek-Loughrey(2019)]{p_multiview_2019}
D.~P and A.~Jurek-Loughrey.
\newblock Multi-view clustering.
\newblock In \emph{Linking and Mining Heterogeneous and Multi-view Data},
  Unsupervised and Semi-Supervised Learning, pages 27--53. Springer, 2019.
\newblock \doi{10.1007/978-3-030-01872-6_2}.

\bibitem[Strehl and Ghosh(2002)]{StrGho2002}
A.~Strehl and J.~Ghosh.
\newblock Cluster ensembles –- {A} knowledge reuse framework for combining
  multiple partitions.
\newblock \emph{Journal of Machine Learning Research}, 2:\penalty0 583--617,
  2002.

\bibitem[Tagarelli et~al.(2017)Tagarelli, Amelio, and Gullo]{TAG17}
A.~Tagarelli, A.~Amelio, and F.~Gullo.
\newblock Ensemble-based community detection in multilayer networks.
\newblock \emph{Data Mining and Knowledge Discovery}, 31\penalty0 (5):\penalty0
  1506--1543, 2017.

\bibitem[Wakabayashi(1986)]{Wakabayashi1986}
Y.~Wakabayashi.
\newblock \emph{Aggregation of Binary Relations: {A}lgorithmic and Polyhedral
  Investigations}.
\newblock PhD thesis, Universit{\"a}t Augsburg, 1986.

\bibitem[Wakabayashi(1998)]{Wakabayashi1998}
Y.~Wakabayashi.
\newblock The complexity of computing medians of relations.
\newblock \emph{Resenhas}, 3\penalty0 (3):\penalty0 323--350, 1998.

\bibitem[Yang and Wang(2018)]{yang_multiview_2018}
Y.~Yang and H.~Wang.
\newblock Multi-view clustering: A survey.
\newblock \emph{Big Data Mining and Analytics}, 1\penalty0 (2):\penalty0
  83--107, 2018.
\newblock \doi{10.26599/BDMA.2018.9020003}.

\bibitem[Yuvaraj et~al.(2021)Yuvaraj, Dey, Lyubchich, Gel, and
  Poor]{yuvaraj2021topological}
M.~Yuvaraj, A.~K. Dey, V.~Lyubchich, Y.~R. Gel, and H.~V. Poor.
\newblock Topological clustering of multilayer networks.
\newblock \emph{Proceedings of the National Academy of Sciences}, 118\penalty0
  (21), 2021.

\end{thebibliography}

\end{document}